%% file: main.tex
\documentclass[sigconf, screen, dvipsnames, nonacm]{acmart}

\usepackage{hyperref}
\usepackage[noend]{algorithmic}
\usepackage{algorithm}
\usepackage{multirow}
\usepackage{mathtools}
\usepackage{amsmath}
\usepackage{bbm}
\usepackage{booktabs}
\usepackage[inline]{enumitem} 
\usepackage{subcaption}
\usepackage{bm} 
\usepackage{xcolor}
\usepackage{balance}
\usepackage{lipsum}

\usepackage{wrapfig}
\usepackage{siunitx}

\usepackage{hwemoji}

\usepackage{arydshln}
\makeatletter
\def\adl@drawiv#1#2#3{
        \hskip.5\tabcolsep
        \xleaders#3{#2.5\@tempdimb #1{1}#2.5\@tempdimb}%
                #2\z@ plus1fil minus1fil\relax
        \hskip.5\tabcolsep}
\newcommand{\cdashlinelr}[1]{%
  \noalign{\vskip\aboverulesep
          \global\let\@dashdrawstore\adl@draw
          \global\let\ adl@draw\adl@drawiv}
  \cdashline{#1}
  \noalign{\global\let\adl@draw\@dashdrawstore
          \vskip\belowrulesep}}
\makeatother

\usepackage{tikz}
\usetikzlibrary{automata, arrows, bayesnet, bending}

\usepackage{tikzscale}

\newtheorem{proposition}{Proposition}

\newtheorem{definition}{Definition}

\newtheorem{remark}{Remark}

\copyrightyear{2026}
\acmYear{2026}
\setcopyright{none} 
\acmConference[CIKM '26]{Proceedings of the 35th ACM International Conference on Information and Knowledge Management}{October 2026}{Rome, Italy}
\acmBooktitle{Proceedings of the 35th ACM International Conference on Information and Knowledge Management (CIKM '26), October 2026, Rome, Italy}
\acmDOI{}
\acmISBN{}

\begin{document}
\title[Distributional Approximate Nearest Neighbour Search for Uncertainty-Aware Retrieval]{Distributional Approximate Nearest Neighbour~\\
Search for Uncertainty-Aware Retrieval}
\subtitle{\textsc{dinosaur}  🦕}

\author{Olivier Jeunen}
\affiliation{
  \institution{aampe}
  \city{Antwerp}
  \country{Belgium}
}

\begin{abstract}
Approximate Nearest Neighbour search indices form the backbone of real-world recommender systems, enabling real-time candidate retrieval over million-item catalogues.
Typically, a single point estimate embedding is learnt for every user and every item.
At serving time, the user embedding queries the index for relevant items.
Since these representations are learnt from sparse interaction data, they are noisy and might fail to capture all the nuances that contribute to ``\textit{relevance}''---ignoring the fundamental uncertainty that is inherent to them.
The result is a retrieval pipeline that is systematically biased toward the small minority of popular head items with well-estimated embeddings, at the expense of the long-tail majority of niche, diverse, and serendipitous content.

We propose \textsc{dinosaur} (\textbf{Di}stributional Approximate Nearest \textbf{N}eighb\textbf{o}ur \textbf{S}e\textbf{a}rch for \textbf{U}ncertainty-Aware \textbf{R}etrieval): a simple and infrastructure-compatible framework to incorporate embedding uncertainty into candidate generation.
Rather than indexing point estimates, \textsc{dinosaur} samples $S_i$ embeddings per item and constructs an index on this augmented set.
Analogously, at query time, a user embedding is sampled.
This two-sided stochastic retrieval process implicitly marginalises over embedding uncertainty, without requiring changes to model architecture or ANN index infrastructure.

On the analytical side, we show that \textsc{dinosaur} recovers standard point-estimate retrieval as uncertainty vanishes, and we characterise how increased embedding variance expands the regions of latent space in which uncertain items are retrievable.
Reproducible empirical observations align with these expectations, showing large coverage gains with small losses in offline recall.
\end{abstract}

\maketitle

\input{1.Introduction}

\input{2.Methodology}
\input{3.Experiments}
\input{4.Conclusions}

\balance

\bibliographystyle{ACM-Reference-Format}
\bibliography{bibliography}

\end{document}

%% file: 1.Introduction.tex
\section{Introduction \& Motivation}\label{sec:intro}
Modern large-scale recommender systems typically consist of multiple sequential stages that serve to refine iteratively smaller sets of items, down to an item or slate of items to be shown to an end-user~\cite{Covington2016,Liu2017,Higley2022}.
The first stage is often referred to as \textit{candidate generation}, filtering $\mathcal{O}(10^6)$ or more items down to $\mathcal{O}(10^3)$~\cite{Asadi2013,Jeunen2024_FIRE}.

Practical implementations of this paradigm typically leverage Approximate Nearest Neighbour (ANN) search: users and items are represented by $d$-dimensional embeddings in a shared vector space, and a pre-computed index over items is queried with the user embedding at serving time.
When similarity is measured by inner product or cosine, this reduces to a Maximum Inner Product Search (MIPS) problem---a well-known database problem for which several efficient production-grade open-source solutions exist~\cite{Aumuller2020} (e.g. \textsc{annoy}~\cite{Github:annoy}, \textsc{faiss}~\cite{Douze2026}, \textsc{scann}~\cite{Guo2020_SCANN}, or \textsc{hnsw}~\cite{Malkov2020}).

This pipeline is efficient and scalable, but it rests on a strong and rarely scrutinised assumption: that a single point estimate faithfully represents the user or item.
In reality, both representations are fundamentally uncertain~\cite{Salakhutdinov2008}.
Some uncertainty is epistemic---learning from sparse interaction data implies high variance for long-tail content; other uncertainty is aleatoric---the low-rank assumption that justifies a latent factor model necessarily loses information~\cite{Hullermeier2021}.
The consequence is a mismatch between a static ANN setup on one hand, and users' diverse and context-dependent information needs paired with highly uncertain item embeddings on the other.

This leads to a systematic bias in retrieval.
Point-estimate ANN search is deterministic by design.
As a result, items that fall just outside of top-$k$ neighbourhoods are permanently excluded from candidate generation, regardless of underlying uncertainty in the item or query.
This produces well-documented pathologies in deployed systems: popularity bias~\cite{Abdollahpouri2019} potentially amplified by feedback loops~\cite{Chaney2018}, and general disparities in exposure that systematically starve tail creators and sellers of impressions~\cite{Diaz2020,Jeunen2021_TS}.
The value of \emph{exploration}, on the other hand, is clear, multi-sided, and sustained~\cite{Guo2020,Su2024_exploration,Jeunen2026_umap}.

Existing remedies (e.g. diversity-promoting objectives~\cite{Liu2022,Liu2025}, post-hoc re-ranking~\cite{Kumar2023}, or exploration bonuses~\cite{Zhu2023,Chen2021}) are typically heuristic, applied at a later ranking stage in the pipeline, and disconnected from the underlying source of the problem.
We argue that the right place to address this is at the retrieval stage itself, by replacing point-estimate embeddings with \emph{distributional} embeddings that explicitly represent uncertainty.
Adapting existing ANN search frameworks to index and query \emph{distributions} over embeddings, however, is no small undertaking~\cite{Github:annoy,Douze2026,Guo2020_SCANN,Malkov2020}.

Our key insight is simple: if we sample multiple embedding vectors per item from a learnt distribution $\mathsf{P}(\mathbf{v}_i)$ and build the ANN index over these samples, items with higher embedding uncertainty naturally cover a larger region of embedding space---and are therefore more likely to be retrieved for users whose true preferences lie in that region.
This requires no changes to the scoring model, no new hyper-parameters beyond the number of samples $S_i$, and no modifications to the ANN index infrastructure itself.
Furthermore, this mechanism provides a principled route to retrieval-stage exploration for long-tail content, with strong connections to Thompson Sampling from the existing literature on sequential decision-making and the explore-exploit trade-off~\cite{Thompson1933}.

We formalise this idea as \textsc{dinosaur} (\textbf{Di}stributional Approximate Nearest \textbf{N}eighb\textbf{o}ur \textbf{S}e\textbf{a}rch for \textbf{U}ncertainty-Aware \textbf{R}etrieval), provide a rigorous theoretical analysis of its properties, and validate it empirically with a reproducible setup.

%% file: 2.Methodology.tex
\section{Problem Setting and Notation}
\label{sec:setting}

Let $\mathcal{U}$ and $\mathcal{I}$ denote the sets of users (e.g. buyers or subscribers) and items (e.g. sellers or creators) respectively, with $|\mathcal{I}| = M$. 
In standard dual-encoder retrieval, representations are treated as deterministic points. Instead, we assume each item $i \in \mathcal{I}$ is associated with a continuous embedding distribution $\mathsf{P}(\mathbf{v}_i)$ over $\mathbb{R}^d$, characterized by mean $\mu_i$ and covariance $\Sigma_i$. 
Similarly, each user $u \in \mathcal{U}$ possesses an embedding distribution $\mathsf{P}(\mathbf{v}_u)$. We denote draws from these distributions as $\mathbf{v}_i \sim \mathsf{P}(\mathbf{v}_i)$ and $\mathbf{v}_u \sim \mathsf{P}(\mathbf{v}_u)$.

Let $\mathrm{sim} : \mathbb{R}^d \times \mathbb{R}^d \to \mathbb{R}$ denote a standard similarity function (e.g. the inner product). 
Standard Approximate Nearest Neighbour (ANN) retrieval acts as a deterministic truncation, returning the top-$k$ items ranked strictly by $\mathrm{sim}(\mu_u, \mu_i)$. This rigid truncation disproportionately penalises highly uncertain, long-tail items whose true relevance may be understated by their point-estimate means.

\begin{definition}[\textsc{dinosaur} Index and Retrieval]
\label{def:dinosaur}
Given a sample budget $S_i \geq 1$ per item, the \emph{\textsc{dinosaur} index} $\mathcal{X}_S$ is constructed by drawing independent samples 
$\mathbf{v}_i^{(1)}, \ldots, \mathbf{v}_i^{(S_i)} \overset{\text{iid}}{\sim} \mathsf{P}(\mathbf{v}_i)$ 
for each item $i \in \mathcal{I}$, and indexing the resulting vectors:
$$ \mathcal{X}_S = \bigcup_{i \in \mathcal{I}} \left\{ \left(\mathbf{v}_i^{(j)}, i\right) : j = 1, \ldots, S_i \right\}. $$
At query time, a user embedding $\mathbf{v}_u \sim \mathsf{P}(\mathbf{v}_u)$ is sampled. Retrieval consists of finding the nearest vectors in $\mathcal{X}_S$ to $\mathbf{v}_u$ and returning their corresponding item identifiers, deduplicating until exactly $k$ distinct items are accumulated.
\end{definition}

\begin{definition}[Supplier Exposure and Catalogue Coverage]
\label{def:retrievability}
The \emph{expected exposure} of item $i$ for user $u$ is its retrieval probability: $\rho(i, u) = \mathsf{P}\left( i \in \mathrm{Top}\text{-}k(u) \right)$. Under standard point-estimate retrieval, this collapses to a deterministic $\{0,1\}$. The \emph{catalogue coverage} of a retrieval scheme is the expected fraction of the total item catalogue that is retrievable for at least one user, serving as a proxy for marketplace health and tail-seller fairness.
\end{definition}


\section{The \textsc{dinosaur} Framework}
\label{sec:method}

Deploying \textsc{dinosaur} in practice requires only two ingredients: a generative source of embedding uncertainty and an existing, unmodified ANN index (e.g. \textsc{faiss}, \textsc{scann}, \textsc{hnsw}). 
Crucially, the method is entirely agnostic to the source of the uncertainty itself.
In practice, $\mathsf{P}(\mathbf{v}_i)$ can be obtained in several standard ways: from Bayesian latent-factor models~\cite{Salakhutdinov2008, Sakhi2020}, approximate Bayesian inference such as variational Bayes~\cite{Kingma2014_VAE,Blundell2015}, inference-time stochasticity such as Monte Carlo dropout~\cite{Gal2016}, or lightweight empirical proxies that tie posterior scale to interaction scarcity, e.g. $\sigma_i \propto (1+n_i)^{-\gamma}$, mirroring the dependence of uncertainty on sample size.

By sampling $S_i$ vectors per item offline, the serving-time path remains identical to standard deterministic ANN retrieval, up to the larger index and the deduplication of neighbours into items.

\textit{One-Sided vs. Two-Sided Exploration.}
\textsc{dinosaur} is highly modular; it can model uncertainty exclusively on the item side (suppliers or creators), the user side (buyers or subscribers), or both simultaneously. In dynamic two-sided marketplaces where user latency constraints prohibit real-time sampling, a \emph{one-sided (item-only)} implementation is sufficient to unilaterally drive exploration toward long-tail and niche content, unlocking the core coverage and exploration benefits we establish below.


\section{Theoretical Analysis}
\label{sec:theory}

We establish the core geometric and distributional properties of \textsc{dinosaur}. For clarity of notation, we assume isotropic Gaussian distributions $\mathsf{P}(\mathbf{v}_i) = \mathcal{N}(\mu_i, \sigma_i^2 \mathbf{I})$ evaluated via inner product.

\begin{proposition}[Safety and Recovery of Point-Estimate Retrieval]
\label{prop:recovery}
For any fixed sample budget $S_i \geq 1$, as embedding variance vanishes ($\sigma_i^2, \sigma_u^2 \to 0$), the probability that \textsc{dinosaur} retrieves the exact deterministic top-$k$ approaches $1$. 
\end{proposition}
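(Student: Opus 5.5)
The plan is a continuity and concentration argument. It needs one non-degeneracy assumption, which we make explicit: the deterministic top-$k$ set $T^\star = \mathrm{Top}\text{-}k(\mu_u)$ is uniquely defined. Concretely, the margin
\[
\Delta = \min_{i \in T^\star,\, j \notin T^\star} \big( \langle \mu_u, \mu_i\rangle - \langle \mu_u, \mu_j\rangle \big)
\]
is strictly positive. Without a strict gap, ties are broken arbitrarily and the statement fails in general, so we add this as a standing assumption.

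First we reduce to a uniform perturbation bound. Write every indexed sample as $\mathbf{v}_i^{(s)} = \mu_i + \sigma_i \xi_i^{(s)}$ and the query as $\mathbf{v}_u = \mu_u + \sigma_u \xi_u$, with all $\xi$ standard Gaussian. Then
\[
\langle \mathbf{v}_u, \mathbf{v}_i^{(s)}\rangle - \langle \mu_u,\mu_i\rangle = \sigma_i\langle \mu_u, \xi_i^{(s)}\rangle + \sigma_u \langle \xi_u, \mu_i\rangle + \sigma_u\sigma_i\langle \xi_u,\xi_i^{(s)}\rangle .
\]
Let $E$ be the event that every one of these finitely many perturbations (at most $N=\sum_i S_i$ of them) has absolute value below $\Delta/2$.

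The key deterministic step is that on $E$ \textsc{dinosaur} returns exactly $T^\star$. Every sample of an item in $T^\star$ scores strictly above every sample of an item outside $T^\star$. Hence the nearest-vector list, read in score order, lists only copies of items in $T^\star$ before any copy of an outside item. Deduplicating until $k$ distinct identifiers are accumulated therefore yields exactly $T^\star$, since each item has $S_i\ge 1$ copies. This holds regardless of how many duplicates are consumed, which is the one point where Definition~\ref{def:dinosaur}'s deduplication must be handled carefully. We also assume exact nearest-neighbour search over $\mathcal{X}_S$, so that approximation error in the ANN structure is not conflated with embedding noise.

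It remains to show $\mathsf{P}(E)\to 1$. For a single perturbation term, Chebyshev's inequality suffices, because its variance is
\[
\sigma_i^2\|\mu_u\|^2 + \sigma_u^2\|\mu_i\|^2 + \sigma_u^2\sigma_i^2 d ,
\]
which tends to $0$. A union bound over the $N$ terms then gives
\[
\mathsf{P}(E^c) \le \frac{4N}{\Delta^2}\Big(\max_i \sigma_i^2 \,\|\mu_u\|^2 + \sigma_u^2 \max_i\|\mu_i\|^2 + \sigma_u^2 \max_i\sigma_i^2\, d\Big) \to 0 .
\]
This works because $S_i$, $M$ and $d$ are all fixed. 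The main obstacle is conceptual rather than computational: the gap assumption $\Delta>0$ and the exactness of the search must be stated, and the deduplication argument above must be made rigorous. The probability estimate itself is routine.
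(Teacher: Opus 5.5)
Your proof is correct and follows the same route as the paper's sketch. Both concentrate each sampled score $\langle \mathbf{v}_u, \mathbf{v}_i^{(j)}\rangle$ around $\langle \mu_u, \mu_i\rangle$, so that the deduplicated ranking (effectively by each item's maximum sampled score) reproduces the deterministic top-$k$ with probability tending to $1$. Your version is more careful in three ways: Chebyshev plus a union bound sidesteps the fact that the cross term $\sigma_u\sigma_i\langle \xi_u, \xi_i^{(s)}\rangle$ is not Gaussian, and you state the margin $\Delta>0$ and exact search over $\mathcal{X}_S$, both of which the proposition needs but does not state.
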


\begin{proof}[Proof Sketch]
By applying a Gaussian tail bound to the sampled inner products $\langle \mathbf{v}_u, \mathbf{v}_i^{(j)} \rangle$, we can show that as $\sigma^2 \to 0$, the maximum sampled score converges strictly to the deterministic mean $\langle \mu_u, \mu_i \rangle$. Consequently, the rank order remains unviolated with high probability.
\end{proof}

\begin{proposition}[Monotonic Coverage Expansion for the Long-Tail]
\label{prop:coverage}
For any user $u$ and any item $i \notin \mathrm{Top}\text{-}k_{\mathrm{point}}(u)$, the retrieval probability $\rho_{\mathrm{\textsc{dinosaur}}}(i, u)$ is strictly non-decreasing with respect to its embedding variance $\sigma_i^2$. 
\end{proposition}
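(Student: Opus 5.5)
The plan is to reduce the retrieval event for item $i$ to a single threshold-crossing event for its best sample, and then use the location--scale structure of the isotropic Gaussian. Fix the user draw $\mathbf{v}_u$ and the samples of all other items $i' \neq i$. After deduplication, item $i$ enters the returned top-$k$ exactly when its best sampled score
\[
  Z_i := \max_{1\le j\le S_i} \langle \mathbf{v}_u, \mathbf{v}_i^{(j)} \rangle
\]
exceeds $\tau$. Here $\tau$ is the $k$-th largest among the per-item maxima $\max_j \langle \mathbf{v}_u, \mathbf{v}_{i'}^{(j)}\rangle$ over $i' \neq i$. Because deduplication keeps, for each item, only its highest-scoring sample, this reduction is exact, with ties broken arbitrarily. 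Moreover, $\tau$ does not depend on $\sigma_i$.

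Next, write $\mathbf{v}_i^{(j)} = \mu_i + \sigma_i \boldsymbol{\varepsilon}_j$ with $\boldsymbol{\varepsilon}_j \sim \mathcal{N}(0,\mathbf{I})$ i.i.d. Conditionally on $\mathbf{v}_u$, we then have $\langle \mathbf{v}_u, \mathbf{v}_i^{(j)}\rangle = m + \sigma_i \|\mathbf{v}_u\| g_j$, where $m := \langle \mathbf{v}_u,\mu_i\rangle$ and the $g_j$ are i.i.d. $\mathcal{N}(0,1)$. This gives the closed form
\[
  \mathsf{P}(i \in \mathrm{Top}\text{-}k \mid \mathbf{v}_u, \tau) \;=\; 1 - \Phi\!\left(\frac{\tau - m}{\sigma_i \|\mathbf{v}_u\|}\right)^{S_i}.
\]
The map $\sigma_i \mapsto (\tau-m)/(\sigma_i\|\mathbf{v}_u\|)$ is non-increasing whenever $\tau \ge m$. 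Since $\Phi$ is increasing, the conditional retrieval probability is then non-decreasing in $\sigma_i$. When $\tau = m$ it is the constant $1-2^{-S_i}$, which is why the conclusion is only \emph{non-decreasing} rather than strictly increasing. The unconditional $\rho_{\textsc{dinosaur}}(i,u)$ is obtained by integrating over $(\mathbf{v}_u,\tau)$, and monotonicity passes through the integral on any event where $\tau \ge m$ holds pointwise.

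The main obstacle is establishing $\tau \ge m$, which is where the hypothesis $i \notin \mathrm{Top}\text{-}k_{\mathrm{point}}(u)$ must enter.
\begin{itemize}
\item In the one-sided, item-only setting of Section~\ref{sec:method}, I would first take the user fixed ($\sigma_u = 0$) and the competing items at their point estimates, i.e.\ vary only $\sigma_i$ starting from a deterministic catalogue. Then $\tau$ is the $k$-th largest of $\langle \mu_u, \mu_{i'}\rangle$ over $i' \neq i$. The hypothesis gives exactly $\tau \ge \langle \mu_u,\mu_i\rangle = m$, and the argument closes.
\item In the fully stochastic case, $\tau - m$ can be negative on some draws, and on those draws extra variance can \emph{hurt} item $i$. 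The cleanest route I would try is to state the result conditionally on the event $\{\tau \ge m\}$. Alternatively, one can argue that this event has probability tending to one in the regime of Proposition~\ref{prop:recovery}, so that competitor and user variances are small relative to the margin $\tau_{\mathrm{point}} - m$. The contribution of the complementary event is then negligible in the derivative with respect to $\sigma_i$.
\end{itemize}

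I would present the deterministic-competitor case as the core proof and flag this as the precise scope of the proposition.
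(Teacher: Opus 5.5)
Your argument is correct within the scope you declare, and it takes a sharper route than the paper's sketch. The paper appeals to a convex ordering of $\mathcal{N}(\mu_i,\sigma_i^2\mathbf{I})$ in $\sigma_i$: more dispersion pushes samples across ``the query's retrieval boundary''. You instead reduce deduplicated retrieval exactly to the exceedance event $Z_i>\tau$, and read monotonicity off $1-\Phi\bigl((\tau-m)/(\sigma_i\|\mathbf{v}_u\|)\bigr)^{S_i}$. This gains two things over the sketch. First, it pins down where the hypothesis $i\notin\mathrm{Top}\text{-}k_{\mathrm{point}}(u)$ is used: it is exactly $\tau\ge m$, which the sketch never invokes. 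Second, it explains why convex order alone cannot finish the proof. Convex order controls expectations of convex functions, but an exceedance probability increases with scale only when the threshold lies above the mean, and it decreases when $\tau<m$. Your argument uses only the location--scale structure, so it also covers the unit-direction perturbations used in the experiments. The paper's boundary is implicitly fixed, which is your deterministic-competitor case.

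Your caveat is not a proof artefact: with sampled competitors the statement as written is false, even for a fixed user. Take $\|\mu_u\|=1$, $k=2$, $S_A=S_B=S_i=1$, and $\langle\mu_u,\mu_A\rangle=\langle\mu_u,\mu_B\rangle=m+\epsilon$. Let $\sigma_A=0$ and $\sigma_B$ be large, and let $g_i,g_B$ be independent standard normals. Then $i\notin\mathrm{Top}\text{-}2_{\mathrm{point}}(u)$ and $\tau-m=\epsilon+\sigma_B\min(0,g_B)$. The retrieval probability $\rho(i,u)=\mathsf{P}\bigl(\sigma_i g_i>\epsilon+\sigma_B\min(0,g_B)\bigr)$ behaves as follows:
\begin{itemize}
\item it is about $1/2$ at $\sigma_i=0$;
\item it is about $3/4$ when $\epsilon\ll\sigma_i\ll\sigma_B$;
\item it tends back to $1/2$ as $\sigma_i\to\infty$.
\end{itemize}

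So the conditional statement on $\{\tau\ge m\}$ is the right formulation. Your ``negligible in the derivative'' alternative should be dropped or actually proved. For small $\sigma_i$ the favourable contribution is itself exponentially small, so a genuine rate comparison would be needed.

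For an unconditional version, use the experimental variant that pins one copy at $\mu_i$. Conditional on $\tau$, retrieval then has probability $1$ on $\{\tau<m\}$ and $1-\Phi(\cdot)^{S_i-1}$ on $\{\tau\ge m\}$. This is monotone on every draw, so the claim survives stochastic users and competitors and no longer needs the hypothesis on $i$.
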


\begin{proof}[Proof Sketch]
Higher variance $\sigma_i^2$ dictates a convex geometric ordering, dispersing the item's probability mass across a broader volume of the latent space. This strictly increases the probability that at least one of the $S_i$ samples intersects the query's retrieval boundary, mathematically guaranteeing higher exposure for high-variance (i.e. highly uncertain) items.
\end{proof}

\begin{remark}[Marketplace Implications]
In most recommendation architectures, embedding variance is inversely proportional to historical interaction volume (i.e. $\sigma_i^2$ is highest for long-tail items). Proposition~\ref{prop:coverage} confirms that \textsc{dinosaur} is not merely adding random noise; it structurally acts as a retrieval-phase exploration bonus.
Niche sellers with high epistemic uncertainty secure a disproportionate, mathematically guaranteed boost in marketplace exposure, bypassing the need for explicit randomisation or post-hoc re-ranking.
\end{remark}


\section{Practical Instantiations \& Systems Trade-offs}
\label{sec:practical}

Deploying \textsc{dinosaur} in production requires navigating hardware constraints and the fundamental tension between user-focused short-term relevance, and item-side coverage and exploration.

\subsection{Adaptive Sample Budgets and Index Bloat}
Na\"ively indexing $S$ samples for an entire industry-scale catalogue ($M \sim 10^8$) introduces a prohibitive memory footprint.
Exploration utility, nevertheless, will be concentrated in the long-tail. 
We propose to resolve this via \emph{Adaptive Sampling}: the sample budget $S_i$ is dynamically conditioned on the item's uncertainty.
By allocating $S_i = 1$ to e.g. the top 20\% of head items, whose embeddings are typically better estimated, and reserving $S_i \in \{2, 5, 10\}$ exclusively for uncertain tail items, the index size scales fractionally rather than multiplicatively, effectively eliminating index bloat.

Search cost scales with the number of indexed samples that can participate in retrieval, but not necessarily linearly in practice: graph- and partition-based ANN methods typically inspect only a small subset of the augmented index at query time.
The main explicit overhead in our implementation is retrieving a larger raw candidate set followed by deduplication to $K$ item identifiers.
\textsc{dinosaur} therefore trades index size and raw-candidate retrieval depth for coverage; adaptive sampling provides a natural way to manage this trade-off when memory or latency is limited.

\subsection{Thompson Sampling Epistemic Posteriors}
\label{subsec:instantiation}
In settings where the underlying representation model is inherently deterministic (e.g. standard iALS~\cite{Rendle2022}), we can construct a mathematically motivated proxy for the posterior by isolating \emph{epistemic uncertainty}---the uncertainty arising strictly from a lack of historical observation. 
Drawing on standard Bayesian estimation principles, the posterior standard deviation of a learned parameter typically decays relative to the number of observations $n$. We mirror this dynamic directly in the geometric space by defining an empirical standard deviation for each item as $\sigma_i = \alpha / (1 + n_i)^{\gamma}$, where $n_i$ is the interaction count. By indexing samples drawn from this synthetic posterior, \textsc{dinosaur} effectively executes a highly scalable form of Thompson Sampling within the candidate generation stage. Sparse tail creators inherently exhibit broad sampling distributions, granting them the geometric variance necessary to explore and compete for retrieval, while data-rich blockbusters gracefully collapse toward their deterministic point-estimates.

\subsection{The Precision-Exploration Trade-off}
\label{subsec:tradeoff}
By expanding the spatial footprint of uncertain items, \textsc{dinosaur} inherently trades a marginal degree of top-level point-estimate precision to maximise exposure for the long-tail.
However, the severity of this perceived drop in offline recall is largely an artefact of the Missing Not At Random (MNAR) nature of historical recommendation datasets~\cite{Jeunen2018,Jeunen2019,Jadidinejad2021}.

Offline logs are heavily biased by the deterministic retrieval policies that collected them, trapping the system in an algorithmic feedback loop~\cite{Chaney2018, Jeunen2019REVEAL_EVAL}.
Consequently, when \textsc{dinosaur} successfully retrieves a highly relevant but historically unexposed tail item, offline metrics routinely penalise it as a ``miss.''
By actively injecting uncertainty, \textsc{dinosaur} breaks this feedback loop.
Furthermore, this dynamic is structurally advantageous in modern multi-stage architectures: the primary objective of ANN retrieval is high-recall candidate generation, not strict precision.
\textsc{dinosaur} effectively shifts the burden of relevance filtering away from the brittle geometric boundaries of the vector index and onto downstream ranking models, which possess the requisite feature depth to accurately assess these newly surfaced exploratory candidates.

%% file: 3.Experiments.tex
\section{Experiments \& Empirical Validation}
\label{sec:experiments}

To empirically validate the effectiveness of the \textsc{dinosaur} framework, we conduct fully reproducible evaluations on a large-scale, real-world recommendation dataset. Our experiments are designed to answer two primary research questions:
\begin{description}
    \item[\textbf{RQ1 (Exploration):}]
    \textit{Does distributional retrieval expand catalogue coverage over traditional methods under varying values of $K$?}
    \item[\textbf{RQ2 (Utility):}]
    \textit{Does the injection of stochastic uncertainty degrade the recall of the top-$K$ recommendations?}
\end{description}
We evaluate \textsc{dinosaur} against exact and approximate nearest neighbour baselines on the MovieLens-32M dataset~\cite{Harper2015}, using iALS embeddings of dimension $d=128$~\cite{Rendle2022}. The task is candidate generation: for each user, we retrieve a set of $K$ items from a corpus of 84,428 items, and evaluate whether the held-out test item appears in the retrieved set (Recall@$K$).
All code is written in \textsc{python} using \textsc{faiss}.

\textit{Data.} We use the full ML-32M dataset comprising approximately 32 million interactions between 200,948 users and 84,428 items. For evaluation, we adopt a leave-one-out protocol with a fixed random seed, holding out one randomly selected interaction per user as the test item. All remaining interactions form the training set.

\textit{Embeddings.} User and item embeddings are learned via iALS with regularization $\lambda = 0.01$, unobserved weight $\alpha_0 = 0.1$, embedding dimension $d = 128$, and 16 training iterations. Embeddings are used without normalisation, preserving the inner product as the model score.
We leverage a synthetic epistemic posterior based on item interaction counts rather than learned distributional embeddings. This isolates the retrieval mechanism from the representation-learning problem; learning calibrated embedding distributions end-to-end is a promising area for future work.

\textit{Methods.} We compare three retrieval strategies:
\begin{enumerate}
    \item \textbf{Flat}: Exact inner product search (\textsc{faiss}' \texttt{IndexFlatIP}~\cite{Douze2026}), serving as the upper bound on recall.
    \item \textbf{IVFFlat}: Approximate search using an inverted file index with $\texttt{nlist} = 4\sqrt{|\mathcal{I}|}$ clusters and $\texttt{nprobe} = 0.1 \cdot \texttt{nlist}$, no quantization.
    \item \textbf{\textsc{Dinosaur} ($S$)}: Each item is represented by $S$ embedding copies in the index---one at the learned mean and $S-1$ stochastic perturbations drawn from unit-direction noise scaled by $\sigma_i = \alpha / (1 + n_i)^{0.25}$ with $\alpha=0.1$, where $n_i$ is the interaction count of item $i$ and $\alpha$ is a base noise scale. The index is an IVFFlat index over $S \cdot |\mathcal{I}|$ vectors with proportionally scaled $\texttt{nlist}$ and the same $\texttt{nprobe}$ fraction. Retrieval of $k' = S \cdot K$ candidates is followed by deduplication to $K$ items, preserving score ordering.
    This allows us to isolate any effects on retrieval utility.
    We focus exclusively on item-side uncertainty (fixing $S_u=1$) to isolate the impact on catalogue coverage.
\end{enumerate}
\textit{Metrics.} We report Recall@$K$ and Catalogue Coverage@$K$ for $K \in \{10, 50, 100, 250, 500, 1000\}$. Catalogue Coverage measures the fraction of the item catalogue that appears in at least one user's top-$K$ list.
We additionally provide a popularity-stratified view: items are partitioned into buckets (quintiles) by rank (B0 = most popular, B4 = least popular), and we report stratified Catalogue Coverage and Inclusion Rate at $K=1000$.
Inclusion Rate is defined as the fraction of users that receives at least one retrieved item from the given popularity quintile.

\subsection{Results and Discussion}

\begin{figure*}[t]
\vspace{-3ex}
    \centering
    \includegraphics[width=\textwidth]{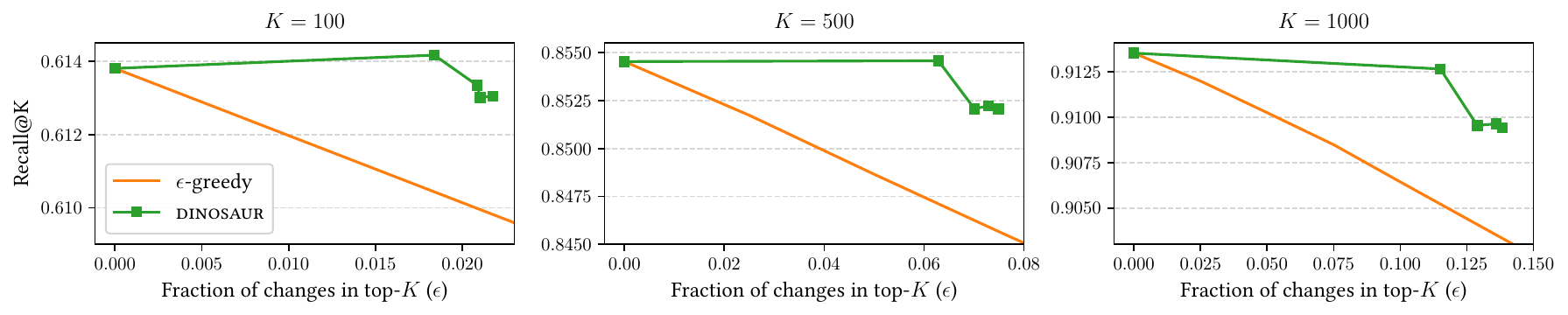}
    \caption{Pareto Frontier of Recall vs. Exploration at varying retrieval depths ($K \in \{100, 500, 1000\}$). \textsc{dinosaur} strictly dominates the $\epsilon$-greedy baseline, yielding substantially higher recall for an equivalent fraction of exploratory recommendations.}
    \label{fig:pareto}
\end{figure*}

\begin{table*}[t]
\vspace{-1ex}
\centering
\setlength{\tabcolsep}{4pt} 
\caption{Global Recall@$K$ and Catalogue Coverage@$K$ (\%) across retrieval methods. Higher is better.}
\label{tab:recall_coverage}
\begin{tabular}{l SSSSSS SSSSSS}
\toprule
& \multicolumn{6}{c}{\textbf{Recall @ $K$}} & \multicolumn{6}{c}{\textbf{Catalogue Coverage (\%) @ $K$}} \\
\cmidrule(lr){2-7} \cmidrule(lr){8-13}
\textbf{Method} & {10} & {50} & {100} & {250} & {500} & {1000} & {10} & {50} & {100} & {250} & {500} & {1000} \\
\midrule
Flat            & 0.2440 & 0.4917 & 0.6155 & 0.7700 & 0.8613 & 0.9247 & 5.61 & 7.85 & 9.11 & 11.28 & 13.37 & 16.01 \\
IVFFlat         & 0.2439 & 0.4910 & 0.6138 & 0.7661 & 0.8545 & 0.9135 & 6.22 & 9.40 & 11.40 & 15.11 & 18.97 & 23.52 \\
\textsc{dinosaur} ($S=2$) & 0.2441 & 0.4913 & 0.6142 & 0.7666 & 0.8546 & 0.9126 & 6.08 & 9.01 & 10.83 & 13.94 & 17.87 & 41.97 \\
\textsc{dinosaur} ($S=3$) & 0.2439 & 0.4903 & 0.6130 & 0.7648 & 0.8521 & 0.9095 & 6.05 & 8.92 & 10.69 & 13.73 & 18.57 & 52.76 \\
\textsc{dinosaur} ($S=4$) & 0.2440 & 0.4908 & 0.6133 & 0.7649 & 0.8522 & 0.9096 & 6.06 & 8.89 & 10.65 & 13.77 & 19.47 & 59.18 \\
\textsc{dinosaur} ($S=5$) & 0.2438 & 0.4905 & 0.6130 & 0.7646 & 0.8521 & 0.9094 & 6.02 & 8.89 & 10.59 & 13.84 & 20.37 & 62.97 \\
\bottomrule
\end{tabular}
\end{table*}

\begin{table*}[t]
\vspace{-1ex}
\centering
\caption{Stratified Catalogue Coverage and Inclusion Rate at $K{=}1000$ by popularity quintile. Higher is better.}
\label{tab:stratified}
\begin{tabular}{l SSSSS SSSSS}
\toprule
& \multicolumn{5}{c}{\textbf{Catalogue Coverage (\%)}} & \multicolumn{5}{c}{\textbf{Inclusion Rate (\%)}} \\
\cmidrule(lr){2-6} \cmidrule(lr){7-11}
\textbf{Method} & {B0} & {B1} & {B2} & {B3} & {B4} & {B0} & {B1} & {B2} & {B3} & {B4} \\
\midrule
Flat            & 77.63 &  2.66 &  0.00 &  0.00 &  0.00 & 100.00 &  1.11 &  0.00 &  0.00 &  0.00 \\
IVFFlat         & 95.37 & 22.65 &  0.20 &  0.00 &  0.00 & 100.00 &  2.27 &  0.00 &  0.00 &  0.00 \\
\textsc{dinosaur} ($S=2$) & 95.59 & 27.26 & 17.00 & 28.83 & 39.01 & 100.00 & 11.07 &  8.59 & 10.19 & 11.35 \\
\textsc{dinosaur} ($S=3$) & 96.24 & 29.52 & 25.28 & 46.31 & 62.43 & 100.00 & 12.36 &  9.76 & 11.70 & 12.77 \\
\textsc{dinosaur} ($S=4$) & 96.70 & 31.77 & 30.28 & 57.28 & 74.92 & 100.00 & 13.23 & 11.06 & 12.39 & 13.60 \\
\textsc{dinosaur} ($S=5$) & 96.83 & 32.43 & 34.57 & 63.81 & 81.88 & 100.00 & 13.43 & 11.21 & 12.51 & 13.76 \\
\bottomrule
\end{tabular}
\end{table*}

To address \textbf{RQ1} and \textbf{RQ2}, we examine the relationship between global catalogue coverage and recall across varying sample limits $S$. As demonstrated in Table \ref{tab:recall_coverage}, \textsc{dinosaur} profoundly expands the catalogue footprint. For $K=1000$, increasing the sampling budget from the baseline ($S=1$) to $S=5$ elevates global coverage from $23.52\%$ to $62.97\%$. Crucially, this near-tripling of the exploration surface area comes at virtually no cost to point-estimate utility, suffering only a $0.0041$ reduction in offline recall.

This dynamic becomes even more apparent when inspecting the stratified results in Table \ref{tab:stratified}. The baseline approximate search completely ignores the least popular items, achieving $0.00\%$ coverage in bucket B4. By contrast, \textsc{dinosaur} successfully retrieves over $81.88\%$ of these highly obscure items, ensuring that over $13\%$ of users organically receive recommendations from the most extreme end of the long-tail.

Finally, we validate the efficiency of this trade-off by comparing \textsc{dinosaur} against a strong $\epsilon$-greedy heuristic. As shown in Figure \ref{fig:pareto}, \textsc{dinosaur} strictly Pareto-dominates uniform random exploration. At any given fraction of modified candidate lists, and across multiple practical retrieval depths ($K \in \{100, 500, 1000\}$), Thompson Sampling via synthetic posteriors consistently preserves higher top-tier recall while injecting relevant diversity.

%% file: 4.Conclusions.tex
\section{Conclusions \& Future Work}
\label{sec:conclusion}
We introduce \textsc{dinosaur}, a lightweight framework for uncertainty-aware candidate generation with existing ANN infrastructure. Rather than changing the scoring model or retrieval backend, \textsc{dinosaur} indexes multiple samples from item embedding distributions and deduplicates retrieved neighbours back to item identifiers. This simple mechanism turns embedding uncertainty into retrieval-stage exploration, increasing the chance that uncertain or long-tail items enter the candidate set before downstream ranking.

Our experiments show that this can substantially expand catalogue coverage with only small losses in offline recall, and that the resulting recall--exploration trade-off compares favourably to uniform $\epsilon$-greedy exploration. These results suggest that uncertainty-aware retrieval is a practical way to improve long-tail exposure at the candidate-generation stage, where deterministic point-estimate ANN search can otherwise impose hard geometric cut-offs.

Several limitations remain.
First, our experiments instantiate uncertainty through an interaction-count-based proxy; the same retrieval mechanism could instead consume learned uncertainty estimates.
Second, offline recommendation logs are MNAR, so recall on historical interactions measures agreement with previously exposed items rather than the full utility of newly surfaced candidates.
Future work should therefore combine propensity-aware offline evaluation, such as IPS or doubly robust estimators, with online A/B testing to measure the user- and marketplace-level impact of uncertainty-aware retrieval.
We hope our work can inspire such validation of \textsc{dinosaur} in broader use-cases and practical instantiations of ANN-based candidate retrieval.\looseness=-1